\documentclass[conference,a4paper]{IEEEtran}
\usepackage{tikz}
\usetikzlibrary{matrix}
\usepackage{graphicx}
\usepackage{epstopdf}
\usepackage{wrapfig}
\usepackage{amsfonts}
\usepackage{makecell}
\usepackage{pdfpages}
\usepackage[utf8]{inputenc} 
\usepackage{url}
\usepackage{ifthen}
\usepackage{cite}
\usepackage[cmex10]{amsmath}
\usepackage{mathtools}
\usepackage{amsthm}
\usetikzlibrary{shapes,arrows}
\usepackage{amssymb}
\usepackage{mathrsfs}

\usepackage[T1]{fontenc} 
\interdisplaylinepenalty=1000 
\usepackage{mleftright}       
\mleftright                   

\usepackage{graphicx}         
\usepackage{booktabs}         

\usepackage{stfloats}
\usepackage{float}

\usetikzlibrary{shapes,arrows}

\DeclareMathOperator*{\argmin}{arg\,min}

\newtheorem{theorem}{Theorem}

\newtheorem{definition}{Definition}

\addtolength{\topmargin}{9mm}

\begin{document}

\title{An Achievability Bound for Variable-Length Stop-Feedback Coding over the Gaussian Channel} 
\author{%
  \IEEEauthorblockN{Ioannis Papoutsidakis, Robert J. Piechocki, and Angela Doufexi}\\
  \IEEEauthorblockA{Communication Systems and Networks Group\\ 
  					Department of Electrical and Electronic Engineering\\
                    University of Bristol\\ 
                    Bristol, BS8 1UB, UK\\
                    Email: \{ioannis.papoutsidakis, r.j.piechocki, a.doufexi\}@bristol.ac.uk}
}

\maketitle

\begin{abstract}Feedback holds a pivotal role in practical communication schemes, even though it does not enhance channel capacity. Its main attribute includes adaptability in transmission that allows for a higher rate of convergence of the error probability to zero with respect to blocklength. Motivated by this fact, we present a non-asymptotic achievability bound for variable-length coding with stop-feedback. Specifically, a general achievability bound is derived, that employs a random coding ensemble in combination with minimum distance decoding. The general bound is particularized for the Gaussian channel. Numerical evaluation of the bound confirms the significant value of feedback compared to transmission with fixed blocklength coding and without feedback.
\end{abstract}

\section{Introduction}

Non-asymptotic upper and lower bounds on the achievable rate of noisy channels are the focus of research related to low-latency communications. Communication schemes that utilize feedback, such as automatic repeat request (ARQ) and hybrid automatic repeat request (HARQ), are widely used in practical applications and support high reliability and adaptation to the channel conditions. The non-asymptotic performance of coding strategies with feedback is the theme of this paper.

In general, a fundamental result of information theory shows that feedback does not increase the capacity of a noisy channel. Despite this fact, feedback is quite useful since it can increase the zero-error capacity, as shown in \cite{1056798}, and can also increase the rate at which the probability of error converges to zero as a function of the blocklength, as discussed in \cite{Bur76}. Some of the most recent significant results for fixed blocklength codes without feedback can be found in \cite{poly}. Similar results were later produced for discrete channels with noiseless and instantaneous feedback in \cite{5961844}. In \cite{9204378}, noisy stop feedback is considered for the finite blocklength regime, with promising results for the binary-input Gaussian channel and the quasi-static Rayleigh fading channel. Specifically, in the context of the fading channel, feedback provides increased diversity as well as better achievable rates compared to fixed blocklength codes.  Asymptotic approximations on the average blocklength of variable-length feedback codes are derived in \cite{8259002} for the Gaussian point-to-point and multiple access channels. 

Variable-length coding with stop-feedback (VLSF) is the general class of widely used techniques, such as HARQ with incremental redundancy. In these communication schemes, transmission is terminated by a binary feedback signal when the decoder has sufficient channel outputs to determine the message. This signal is usually known as an acknowledgement in case of ``stop'' or a negative acknowledgement in case of ``continue''. A non-asymptotic achievability bound of VLSF coding can provide information not only for the average blocklength but also the distribution of the decoding time. This characteristic is not captured by asymptotic approximations such as those in \cite{8259002}. Additionally, asymptotic approximations require the heuristic substitution of big $O$ terms, e.g. \cite[(296)]{poly}, and can be inaccurate for small blocklengths.

This work provides an achievability bound on the rate of VLSF codes over the Gaussian channel with an average codebook power constraint. Specifically, it presents a general bound based on the minimum distance decoder and the definition of VLSF codes given by Polyanskiy et al. in \cite{5961844}. The general bound is specialized for the Gaussian channel by employing a setting similar to that in \cite{10206565} for the minimum distance decoder. Finally, the bound is numerically evaluated, enabling a comparison between VLSF coding and fixed blocklength coding over the Gaussian channel.

In section \ref{notation}, the definition of VLSF codes is presented as well as the notation we follow throughout this paper. The main theorem is given in section \ref{general_result}. The particularization of the main result for the Gaussian channel is derived in section \ref{gaus_res}. The final remarks and conclusions are made in section \ref{concl}.

\section{Notation and Definitions}
\label{notation}

Throughout this paper, the blocklength is denoted by $n$. Uppercase letters represent random variables, and lowercase letters denote their realizations. Random vectors are indicated by bold uppercase letters. Superscripts refer to the first $m$ entries of a vector, denoted as $\textbf{X}^m = (X_1, ..., X_m)$. The Euclidean norm is symbolized by $\Vert \cdot \Vert$. The multivariate normal distribution is denoted with  $\mathcal{N}(\mu,\Sigma)$ where $\mu$ is the mean vector and $\Sigma$ is the covariance matrix. The $m\times m$ identity matrix is denoted with $\textbf{I}_m$. The probability density function (pdf) of the non-central chi-squared distribution with $\kappa$ degrees of freedom and non-centrality parameter $\nu$ is represented with $f_{\mathcal{X}_\kappa^2}(x;\nu)$. Equivalently, the pdf of the gamma distribution with shape parameter $\kappa$ and scale parameter $\theta$ is denoted with $f_\Gamma(x;\kappa,\theta)$. Cumulative distribution functions (cdf) are defined analogously but with an upper case function name, e.g. $F_\Gamma(x;\kappa,\theta)$.

Let us consider a sequence of identical conditional probability kernels $\{P_{Y_i|X_i}\}_{i=1}^\infty$ that define a memoryless channel with input alphabet $\mathcal{A}$ and output alphabet $\mathcal{B}$. The following definition of VLSF codes is a modification of the definition of variable length feedback codes given in \cite{5961844}.

%

\begin{definition}
An $(l,M,\epsilon)$ variable-length stop-feedback code, where $l$ is a positive real number, $M$ is a positive integer number, and $0 \leq \epsilon \leq 1$, is defined by:

\begin{enumerate}
\item A space $\mathcal{U}$ with a probability distribution $P_U$ that defines a random variable $U$ which initializes the codebook on both the transmitter and the receiver before the start of transmission.

\item A sequence of encoders $f_n:\mathcal{U}\times\{1,...,M\} \rightarrow \mathcal{A}$, defining channel inputs
\begin{align}
\textbf{X}_n=f_n(U,W)
\end{align}
where $W\in\{1,...,M\}$ is the equiprobable message.

\item A sequence of decoders $g_n:\mathcal{U}\times\mathcal{B}^{n} \rightarrow \{1,...,M\}$ providing the best estimate of $W$ at time $n$.

\item A non-negative integer-valued random variable $\tau$, which represents a stopping time of the filtration $\mathscr{G}_n=\sigma\{U,Y_1,...,Y_n\}$ and satisfies 
\begin{align}
\mathbb{E}[\tau]\leq l.
\end{align}
\end{enumerate}
The final estimation of message $W$ is given at time $\tau$
\begin{align}
\hat{W}=g_{\tau}(U,\textbf{Y}^{\tau})
\end{align}
and satisfies the probability of error constraint
\begin{align}
P(\hat{W}\neq W) \leq \epsilon.
\end{align}
\label{defVLSF}
\end{definition} 
The key element of this definition is the random variable $U$ that is used to define an ensemble of random codes. This is necessary because the existence of one specific code that achieves the constraints of the coding ensemble is not guaranteed, contrary to the fixed-length coding case. 

\section{General Achievability Bound}
The following theorem is the general result of this paper and it describes the constraints that an $(l,M,\epsilon)$ VLSF code must satisfy when a minimum distance decoder is utilized.
\label{general_result}
\begin{theorem}
Fix a threshold $0 \leq \epsilon \leq 1$ on the probability of error and a memoryless channel $\{P_{Y_i|X_i}\}_{i=1}^\infty$. Let the arbitrary mutually independent and identical processes $\textbf{X}_i= (X_{i1},X_{i2},...,X_{in},...)$ for $i=1,...,M$ where $X_{in} \in\mathcal{A}$. Let the marginal distribution of $\textbf{X}_i^n$ be denoted as $P_{\textbf{X}^n}$. The channel output $\textbf{Y}$ is received when $\textbf{X}_1$ is the channel input. Define a sequence of distance functions $d:\mathcal{A}^n\times\mathcal{B}^n\rightarrow\mathbb{R}_{\geq 0}$.  Finally, let the random variables

\begin{align}
V_{1,n} &= d(\textbf{X}_1^n,\textbf{Y}^n),\\
V_{2,n} &= \min_{j=2,...,M } d(\textbf{X}_j^n,\textbf{Y}^n),\\
K_n &= \min_{i=1,2}(V_{i,n}),
\end{align}
and a stopping time
\begin{align}
\tau = \inf\{n \geq 0:\lambda(K_n,\textbf{Y}^n)\leq \epsilon \}
\end{align}
where
\begin{align}
\begin{split}
\lambda(\upsilon,\textbf{y}^n)\geq P(V_{2,n} = \upsilon|K_n =\upsilon,\textbf{Y}^n=\textbf{y}^n).
\end{split}
\end{align}
Then for any $M$ there exists an $(l,M,\epsilon)$ VLSF code with
\begin{align}
\begin{split}
l &\leq \mathbb{E}[\tau].
\end{split}
\end{align}
\label{theorG}
\end{theorem}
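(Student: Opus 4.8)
The plan is to exhibit a common-randomness-assisted code whose decoding time coincides with the stopping time $\tau$ of the statement, and then to verify the two constraints of Definition \ref{defVLSF}: the average-blocklength bound and the error bound. No derandomization is needed, because the definition of a VLSF code already admits the initialization variable $U$. I would let $U$ encode a realization of the $M$ mutually independent and identically distributed codeword processes $\mathbf{X}_1,\dots,\mathbf{X}_M$, set the encoder $f_n(U,w)$ to output the $n$-th symbol of the $w$-th codeword, and take the decoder $g_n(U,\mathbf{Y}^n)=\argmin_{j} d(\mathbf{X}_j^n,\mathbf{Y}^n)$ to be the minimum-distance decoder. Since the codewords are i.i.d.\ and $W$ is equiprobable, symmetry reduces the error analysis to the event $\{W=1\}$; this is exactly the regime of the statement, in which $\mathbf{Y}$ is the output generated by $\mathbf{X}_1$ and $\mathbf{X}_2,\dots,\mathbf{X}_M$ act as independent competitors.

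First I would check that $\tau$ is a legitimate stopping time for this code. Because $U$ determines the codewords, each of $V_{1,n}$, $V_{2,n}$, and hence $K_n$, is $\mathscr{G}_n$-measurable, so $\lambda(K_n,\mathbf{Y}^n)$ is $\mathscr{G}_n$-measurable and $\{\tau\le n\}\in\mathscr{G}_n$. The average-blocklength constraint is then immediate: taking the decoding time to be $\tau$ yields expected blocklength $\mathbb{E}[\tau]$, so the pair satisfies Definition \ref{defVLSF} with $l=\mathbb{E}[\tau]$, which in particular meets $l\le\mathbb{E}[\tau]$. Here I would assume the nondegenerate case $\mathbb{E}[\tau]<\infty$, so that $\tau<\infty$ almost surely and the decoder a.s.\ produces an estimate $\hat{W}=g_{\tau}(U,\mathbf{Y}^\tau)$.

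The heart of the argument is the error bound. Under minimum-distance decoding a decoding error at time $\tau$ occurs precisely when some competing codeword is at least as close to $\mathbf{Y}^\tau$ as the transmitted one, i.e.\ on the event $\{V_{2,\tau}\le V_{1,\tau}\}=\{K_\tau=V_{2,\tau}\}$ (resolving ties pessimistically, which only inflates the bound). I would then write
\begin{align}
P(\hat{W}\neq W)\le P(K_\tau=V_{2,\tau})=\sum_{n}\mathbb{E}\big[\mathbf{1}\{\tau=n\}\,\mathbf{1}\{K_n=V_{2,n}\}\big],
\end{align}
and condition each term on the information $(K_n,\mathbf{Y}^n)$ available when the stop occurs. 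Since $\{\tau=n\}$ is determined by the values $\lambda(K_m,\mathbf{Y}^m)$ for $m\le n$, and on that event $\lambda(K_n,\mathbf{Y}^n)\le\epsilon$ by the definition of $\tau$, the defining inequality $\lambda(\upsilon,\mathbf{y}^n)\ge P(V_{2,n}=\upsilon\mid K_n=\upsilon,\mathbf{Y}^n=\mathbf{y}^n)$ lets me replace the conditional error probability at the stopping instant by $\lambda(K_\tau,\mathbf{Y}^\tau)\le\epsilon$, giving $P(\hat{W}\neq W)\le\epsilon\sum_{n}P(\tau=n)\le\epsilon$.

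The step I expect to be the main obstacle is precisely this conditioning at the random time $\tau$. The inequality defining $\lambda$ controls the conditional probability of $\{V_{2,n}=K_n\}$ given only the \emph{current} pair $(K_n,\mathbf{Y}^n)$, whereas the event $\{\tau=n\}$ encodes the entire past history $\{(K_m,\mathbf{Y}^m):m<n\}$. I would therefore need an optional-sampling-type argument showing that, on $\{\tau=n\}$, conditioning on this extra history does not raise the conditional error probability above $\lambda(K_n,\mathbf{Y}^n)$ — equivalently, that $\lambda$ may be taken to dominate the history-conditioned posterior — so that the tower property applies term by term. Once this measurability subtlety is settled, the remaining manipulations (the symmetry reduction and the summation over $n$) are routine.
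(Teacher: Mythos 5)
Your construction is exactly the paper's: the same choice of $\mathcal{U}=(\mathcal{A}^\infty)^M$ with product law $P_{\textbf{X}^\infty}^{\times M}$, the same minimum-distance decoder, the same symmetry reduction to $W=1$ and to the pair $(V_{1,n},V_{2,n})$, and the same stopping rule; so in structure this is the paper's proof. Two remarks. First, a small slip in your measurability check: $V_{1,n}=d(\textbf{C}_W^n,\textbf{Y}^n)$ and $V_{2,n}$ individually depend on $W$, which is \emph{not} in $\mathscr{G}_n=\sigma\{U,Y_1,\dots,Y_n\}$; what is $\mathscr{G}_n$-measurable is $K_n=\min_{j=1,\dots,M}S_{j,n}$, since all $M$ distances $S_{j,n}$ are computable from $(U,\textbf{Y}^n)$. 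That identity, not the measurability of $V_{1,n}$ and $V_{2,n}$ separately, is what makes $\tau$ a legitimate stopping time.

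Second, the ``main obstacle'' you flag — that the hypothesis only bounds $P(V_{2,n}=\upsilon\mid K_n=\upsilon,\textbf{Y}^n=\textbf{y}^n)$, whereas the event $\{\tau=n\}$ additionally conditions on the past values $K_1,\dots,K_{n-1}$, which carry information about the competing codewords — is a real subtlety, and you should know that the paper does not resolve it either. The paper's proof simply asserts that since $P(\hat{W}\neq W\mid \textbf{Y}^n=\textbf{y}^n)\leq\lambda(\upsilon,\textbf{y}^n)$ for each fixed $n$, stopping when $\lambda(K_n,\textbf{Y}^n)\leq\epsilon$ guarantees the error constraint; it never performs the decomposition over $\{\tau=n\}$ that you correctly identify as necessary, so your honest acknowledgment of the gap is, if anything, more careful than the published argument. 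It is worth noting that in the paper's Gaussian particularization the issue largely dissolves, because there the stopping statistic is replaced by $\lambda(V_{1,n},\Lambda)$, a function of the transmitted codeword and the output only, which is independent of the competitors $\textbf{X}_2,\dots,\textbf{X}_M$; conditioning on that history then does not alter the conditional law of $V_{2,n}$ given $\textbf{Y}^n$. At the level of the general theorem, however, where the stopping rule uses $K_n$ itself, closing your gap would require strengthening the hypothesis on $\lambda$ (e.g., requiring it to dominate the posterior conditioned on the whole history $\sigma\{K_1,\dots,K_n,\textbf{Y}^n\}$) or an additional argument that the past of $K$ is conditionally irrelevant — neither you nor the paper supplies this, so you have not introduced a new error, but you also have not completed the proof at the one point where completion was genuinely needed.
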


\begin{proof}
Based on Definition \ref{defVLSF}, to define a code we need to specify $(U,f_n,g_n,\tau)$. Random variable $U$ is defined as follows.

\begin{align}
\mathcal{U} \triangleq \underbrace{\mathcal{A}^\infty\times ...\times \mathcal{A}^\infty}_\text{$M$ times},
\end{align}
\begin{align}
P_U \triangleq \underbrace{P_{\textbf{X}^\infty} \times ...\times P_{\textbf{X}^\infty}}_\text{$M$ times}
\end{align}
where $P_{\textbf{X}^\infty}$ is the marginal distribution of $\textbf{X}_i^\infty$ for any $i=1,...,M$.
The common information provided to the encoder and the decoder by the realization of $U$ defines $M$ codewords of infinite length $\textbf{C}_i\in \mathcal{A}^\infty$ for $i=1,...,M$. The encoder $f_n$ carries out the mapping
\begin{align}
f_n(w)=C_{wn}
\end{align}
where $C_{wn}$ denotes the $n$th entry of $\textbf{C}_w$.

At the time of the reception of the $n$th symbol, the decoder computes $M$ distances
\begin{align}
S_{j,n} = d(\textbf{C}_j^n,\textbf{Y}^n)
\end{align} 
for $j =1,...,M$. The minimum distance decoder estimates the message $w$ as
\begin{align}
g_n(\textbf{Y}^n)=\argmin_{j=1,...,M} S_{j,n}.
\end{align}

An alternative decoding scheme with higher error rate and average blocklength is the following. Assume, without any loss of generality, that the transmitted message is $W=1$. Then
\begin{align}
g_n(\textbf{Y}^n)=\argmin_{j=1,2} V_{j,n}.
\end{align} 
where
\begin{align}
V_{1,n} = S_{1,n}
\end{align}
and
\begin{align}
V_{2,n} = \min_{j=2,...,M} S_{j,n}.
\end{align}

This conversion is easier to computationally manipulate since it avoids operations with $M$ terms without significantly relaxing the final result. In fact, in the case of a continuous channel, where the $V_{1,n}$ and $V_{2,n}$ are continuous, this conversion does not relax the achievability bound because 
\begin{align}
P(V_{1,n} =V_{2,n})=0.
\label{cntvar}
\end{align}

The probability of error of the minimum distance decoder after $n$ channel uses is bounded as follows. Let
\begin{align}
K_n = \min_{i=1,2}(V_{i,n}) 
\end{align}
and define a function $\lambda$ such that
\begin{align}
\begin{split}
P(&\hat{W} \neq W|\textbf{Y}^n=\textbf{y}^n) \\
&\stackrel{(a)}{\leq} P(V_{2,n} = \upsilon|K_n =\upsilon,\textbf{Y}^n=\textbf{y}^n) \\
&\leq\lambda(\upsilon,\textbf{y}^n)
\end{split}
\end{align}
where $(a)$ is due to the process of minimum distance decoding and the assumption that the event $V_{1,n}=V_{2,n}$ always results in an error. In the case of a continuous channel, the probability of this event is zero as in (\ref{cntvar}), and $(a)$ can be an equality.

Note that the transmission should only stop when,
\begin{align}
P(\hat{W}\neq W|\textbf{Y}^n=\textbf{y}^n) \leq \epsilon.
\label{errconstr}
\end{align}
If the stoppage criterion is $\lambda(\upsilon,\textbf{y}^n) \leq \epsilon$, constraint (\ref{errconstr}) will always be satisfied. Hence, let a stopping time 
\begin{align}
\tau = \inf\{n \geq 0:\lambda(K_n,\textbf{Y}^n)\leq \epsilon \}
\end{align}
then
\begin{align}
l \leq \mathbb{E}[\tau].
\end{align}
\end{proof}

This achievability bound can be used with any channel for which the minimum distance decoder is optimal, such as the binary symmetric channel (BSC), the binary erasure channel (BEC), and the Gaussian channel. The focus of this work is on the latter. We conjecture, however, that in the cases where information density is maximized by the minimization of distance, such as in the BSC and the BEC \cite{poly}, the particularization of the bound is equivalent to the one in \cite{5961844}.  

\section{The Gaussian Channel}
\label{gaus_res}
The Gaussian Channel, a continuous channel with additive Gaussian noise and a power constraint, is the most fundamental and well-studied among continuous channels. Specifically, for channel input $\textbf{X}^n \in \mathbb{R}^n$, the channel output is
\begin{align}
\textbf{Y}^n=\textbf{X}^n+\textbf{Z}^n
\end{align}
where $\textbf{Z}^n\sim\mathcal{N}(0,\sigma_Z^2\textbf{I}_n)$ and  $\textbf{Y}^n \in \mathbb{R}^n$. In the context of fixed blocklength codes, the literature often defines various power constraints, which differ based on whether the restriction applies to each codeword or the entire codebook \cite{6767457, poly}. For VLSF codes, as defined in this paper, the codebook consists of infinite-dimensional codewords, and the most natural power constraint is as follows. Let the codeword $\textbf{c}_i \in \mathbb{R}^n$ that satisfies
\begin{align}
\lim_{n\rightarrow\infty}\frac{\Vert \textbf{c}_i^n \Vert^2}{n}=\sigma_X^2
\end{align}
for $i=1,...,M$. We use the notation of $\sigma_X^2$ for the power of the codeword because the power of a zero-mean signal is equal to its variance. Hence, the signal-to-noise ratio is $\gamma={\sigma_X^2}/{\sigma_Z^2}$.

The achievability bound for VLSF coding over the Gaussian channel can be assessed with the use of the following theorem.
\begin{theorem}
For the Gaussian channel with signal-to-noise ratio $\gamma$, let a channel output $\textbf{Y}$ given that the channel input is $\textbf{X}=(X_1,X_2,...,X_n,...)$. The channel input follows the $\mathcal{N}(\boldsymbol 0,\textbf{I}_\infty)$. Then

\begin{align}
\Lambda &= \Vert \textbf{Y}^n \Vert^2\\
d(\textbf{X}^n,\textbf{Y}^n) &= (X_1-Y_1)^2+...+(X_n-Y_n)^2,\\
\tau &= \inf\{n \geq 0:\lambda(d(\textbf{X}^n,\textbf{Y}^n),\Lambda)\leq \epsilon \},\\
\lambda(\upsilon,\nu) &= \frac{\lambda_1(\upsilon,\nu)}{\lambda_1(\upsilon,\nu)+\lambda_2(\upsilon,\nu)},
\end{align}
%
where
\begin{align}
\begin{split}
\lambda_1(\upsilon,\nu)= (M-1)&f_{\mathcal{X}_n^2}(\upsilon;\nu)\\
&\cdot\int_{\upsilon}^\infty  f_{\Gamma}(x;2^{-1}n,2\gamma^{-1})f_{\mathcal{X}_n^2}(\nu;x)dx,
\end{split}
\end{align}
and
\begin{align}
\lambda_2(\upsilon,\nu)=f_{\Gamma}(\upsilon;2^{-1}n,2\gamma^{-1})f_{\mathcal{X}_n^2}(\nu;\upsilon)(1-F_{\mathcal{X}_n^2}(\upsilon;\nu)).
\end{align}
\label{theor2}
Then for any $M$ there exists an $(l,M,\epsilon)$ VLSF code with
\begin{align}
\begin{split}
l &\leq \mathbb{E}[\tau].
\end{split}
\end{align}
\end{theorem}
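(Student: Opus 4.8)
The plan is to derive Theorem~\ref{theor2} as a direct specialization of Theorem~\ref{theorG}: I would instantiate the general bound with the squared Euclidean distance $d(\textbf{X}^n,\textbf{Y}^n)=\sum_{i=1}^n(X_i-Y_i)^2$ and the ensemble $\textbf{X}_i\sim\mathcal{N}(\boldsymbol 0,\textbf{I}_\infty)$, and then show that the displayed $\lambda(\upsilon,\nu)$ is a valid choice of the function $\lambda$ required there. Since $\sigma_X^2=1$, the SNR constraint forces $\sigma_Z^2=\gamma^{-1}$, so $\textbf{Z}^n\sim\mathcal{N}(\boldsymbol 0,\gamma^{-1}\textbf{I}_n)$. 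First I would identify the two distance laws. Because $\textbf{Y}^n=\textbf{X}_1^n+\textbf{Z}^n$, the correct-codeword distance is $V_{1,n}=\Vert\textbf{Z}^n\Vert^2$, so $V_{1,n}\sim\Gamma(2^{-1}n,2\gamma^{-1})$, matching $f_\Gamma(\cdot;2^{-1}n,2\gamma^{-1})$. For every wrong codeword $j\geq 2$, $\textbf{X}_j^n$ is independent of $\textbf{Y}^n$ and $\mathcal{N}(\boldsymbol 0,\textbf{I}_n)$, hence, given $\textbf{Y}^n$, each distance $d(\textbf{X}_j^n,\textbf{Y}^n)$ is non-central chi-squared with $n$ degrees of freedom and non-centrality $\Lambda=\Vert\textbf{Y}^n\Vert^2$, with density $f_{\mathcal{X}_n^2}(\cdot;\Lambda)$, and these $M-1$ distances are conditionally i.i.d.

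The key reduction is that $\Lambda$ is a sufficient statistic for $\textbf{Y}^n$. I would argue from the spherical symmetry of the Gaussian law that the conditional distributions of both $V_{1,n}$ and $V_{2,n}$ given $\textbf{Y}^n=\textbf{y}^n$ depend on $\textbf{y}^n$ only through $\nu=\Vert\textbf{y}^n\Vert^2$, and that $V_{1,n}$ and $V_{2,n}$ are conditionally independent given $\textbf{Y}^n$ (the wrong codewords are independent of $(\textbf{X}_1^n,\textbf{Z}^n)$). Consequently the conditional error probability appearing in Theorem~\ref{theorG} is itself a function of $(\upsilon,\nu)$ alone, which legitimizes defining the stopping time through $\Lambda$. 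To obtain the explicit joint law I would condition the other way: given $V_{1,n}=\Vert\textbf{Z}^n\Vert^2=x$, the output norm $\Lambda=\Vert\textbf{X}_1^n+\textbf{Z}^n\Vert^2$ is non-central chi-squared with non-centrality $x$, so the joint density of $(V_{1,n},\Lambda)$ is $f_\Gamma(x;2^{-1}n,2\gamma^{-1})\,f_{\mathcal{X}_n^2}(\nu;x)$. In particular $\int_\upsilon^\infty f_\Gamma(x;2^{-1}n,2\gamma^{-1})f_{\mathcal{X}_n^2}(\nu;x)\,dx$ equals $f_\Lambda(\nu)\,P(V_{1,n}>\upsilon\mid\Lambda=\nu)$, which is exactly the integral in $\lambda_1$.

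Next I would evaluate $P(V_{2,n}=\upsilon\mid K_n=\upsilon,\Lambda=\nu)$ by splitting $\{K_n=\upsilon\}$ into its error branch $\{V_{2,n}=\upsilon<V_{1,n}\}$ and its correct branch $\{V_{1,n}=\upsilon<V_{2,n}\}$, which is licit since ties have probability zero on a continuous channel. Using conditional independence given $\Lambda=\nu$, the error branch has density $(M-1)f_{\mathcal{X}_n^2}(\upsilon;\nu)[1-F_{\mathcal{X}_n^2}(\upsilon;\nu)]^{M-2}$ (the density of the minimum of $M-1$ i.i.d.\ non-central $\chi^2$ variables) times $P(V_{1,n}>\upsilon\mid\Lambda=\nu)$, while the correct branch has density $f_{V_{1,n}\mid\Lambda}(\upsilon\mid\nu)$ times the survival $[1-F_{\mathcal{X}_n^2}(\upsilon;\nu)]^{M-1}$. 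Forming the ratio, a common factor $[1-F_{\mathcal{X}_n^2}(\upsilon;\nu)]^{M-2}$ cancels, one surviving $[1-F_{\mathcal{X}_n^2}(\upsilon;\nu)]$ stays with the correct-branch term, and multiplying numerator and denominator by $f_\Lambda(\nu)$ reproduces precisely $\lambda_1$ and $\lambda_2$. Thus the stated $\lambda(\upsilon,\nu)$ equals this conditional error probability, so it satisfies the hypothesis of Theorem~\ref{theorG} with equality, and the existence of an $(l,M,\epsilon)$ code with $l\leq\mathbb{E}[\tau]$ follows immediately.

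I expect the main obstacle to be the bookkeeping in this last step: one has to keep the non-central $\chi^2$ and gamma densities straight under two different conditionings (on $\Lambda$ and on $V_{1,n}$), recognize the integral in $\lambda_1$ as $f_\Lambda(\nu)P(V_{1,n}>\upsilon\mid\Lambda=\nu)$, and track the powers of $1-F_{\mathcal{X}_n^2}$ so that the cancellation of $[1-F_{\mathcal{X}_n^2}]^{M-2}$ leaves exactly the claimed expression. Establishing the sufficiency of $\Lambda$ rigorously, so that $\lambda$ may legitimately depend on $\textbf{Y}^n$ only through its norm, is the other point that needs care, though it follows cleanly from rotational invariance.
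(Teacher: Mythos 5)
Your overall route is the same as the paper's: specialize Theorem~\ref{theorG} to the squared Euclidean distance and the i.i.d.\ standard normal ensemble, identify $V_{1,n}\sim\Gamma(2^{-1}n,2\gamma^{-1})$ and the wrong-codeword distances as conditionally i.i.d.\ non-central $\chi^2_n$ with non-centrality $\Lambda$, invoke spherical symmetry to reduce the conditioning from $\textbf{Y}^n$ to $\nu=\Vert\textbf{y}^n\Vert^2$, and compute $P(V_{2,n}=\upsilon\mid K_n=\upsilon,\Lambda=\nu)$ by splitting $\{K_n=\upsilon\}$ into the error branch and the correct branch. Your identification of $\int_\upsilon^\infty f_\Gamma(x;2^{-1}n,2\gamma^{-1})f_{\mathcal{X}_n^2}(\nu;x)\,dx$ as $f_\Lambda(\nu)P(V_{1,n}>\upsilon\mid\Lambda=\nu)$, and the cancellation of the common factor $(1-F_{\mathcal{X}_n^2}(\upsilon;\nu))^{M-2}$, match the paper's computation in its long display exactly.

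There is, however, one step you skip that the theorem statement forces you to address. Theorem~\ref{theorG} defines the stopping time through $\lambda(K_n,\cdot)$ with $K_n=\min(V_{1,n},V_{2,n})$, whereas Theorem~\ref{theor2} declares $\tau=\inf\{n\geq 0:\lambda(d(\textbf{X}^n,\textbf{Y}^n),\Lambda)\leq\epsilon\}$, i.e.\ the first argument is $V_{1,n}$ alone, not the minimum over the whole codebook. Your closing sentence treats these as the same object, but they are different stopping rules, and the claimed bound $l\leq\mathbb{E}[\tau]$ refers to the $V_{1,n}$-driven one. The paper closes this gap by observing that $\lambda(\upsilon,\nu)$ is nondecreasing in $\upsilon$ (the Gaussian density is monotone in distance), so $K_n\leq V_{1,n}$ gives $\lambda(K_n,\Lambda)\leq\lambda(V_{1,n},\Lambda)$; hence the $V_{1,n}$-based rule stops no earlier than the $K_n$-based rule $\bar\tau$ of Theorem~\ref{theorG}, it still guarantees the error constraint at the moment it stops, and $l\leq\mathbb{E}[\bar\tau]\leq\mathbb{E}[\tau]$. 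Without some such monotonicity (or an equivalent) argument, your proof establishes the bound only for the $K_n$-based stopping time, not for the $\tau$ appearing in the statement. Everything else in your proposal is sound and in line with the paper.
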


\begin{proof}
We apply Theorem \ref{theorG}. The appropriate distance metric over $\mathbb{R}^n$ is the Euclidean distance
\begin{align}
d_E(\textbf{X}^n,\textbf{Y}^n) = \sqrt{(X_1-Y_1)^2+...+(X_n-Y_n)^2}.
\end{align}
However, as the codeword distances are only compared in search of the minimum, using the squared Euclidean distance is convenient,
\begin{align}
d(\textbf{X}^n,\textbf{Y}^n) = (X_1-Y_1)^2+...+(X_n-Y_n)^2.
\end{align}
Many distributions that describe this metric for random vectors are readily available, and the resulting bounds remain the same.

Each codeword is a standard normal random vector meaning that it is  a realization of the infinite-dimensional multivariate normal distribution with mean vector $\boldsymbol0$ and covariance matrix $\textbf{I}_{\infty}$. The choice of this distribution is based on the fact that it is capacity-achieving for the Gaussian channel, i.e. optimal for $n\rightarrow\infty$. We conjecture that it gives good bounds for finite $n$ as well.

Given that the codebook is distributed as mentioned above, $V_{1,n}=d(\textbf{X}_1,\textbf{Y})$ follows the the gamma distribution with shape $2^{-1}n$ and scale $2\gamma^{-1}$. The distances $d(\textbf{X}_i,\textbf{Y})$ for $i=2,...,M$ follow the non-central chi-squared distribution with non-centrality parameter
\begin{align}
\Lambda =\Vert \textbf{Y}^n \Vert^2.
\end{align}
Therefore, the cumulative distribution function of $V_{2,n}$ given $\Lambda = \nu$ is the following.
\begin{align}
\begin{split}
F_{V_{2,n}}(x;\nu)&=P(V_{2,n}\leq x)\\
&=1-P(V_{2,n}\geq x)\\
&=1-\prod_{i=2}^{M}P(d(\textbf{X}_i,\textbf{Y})\geq x)\\
&=1-P(d(\textbf{X}_2,\textbf{Y})\geq x)^{M-1}\\
&=1-(1-F_{\mathcal{X}_n^2}(x;\nu))^{M-1}.
\end{split}
\label{mincdf}
\end{align}
Then, the probability density function is 
\begin{align}
\begin{split}
f_{V_{2,n}}&(x;\nu)\\
&=\frac{dF_{V_{2,n}}(x;\nu)}{dx}\\
&=(M-1)(1-F_{\mathcal{X}_n^2}(x;\nu))^{(M-2)}f_{\mathcal{X}_n^2}(x;\nu).
\end{split}
\end{align}

Hence, the probability of error of the minimum distance decoder after the reception of the $n$th symbol can be evaluated by the function $\lambda(\upsilon,\nu)$ derived in (\ref{makrinar}) at the bottom of the page. In this derivation, $(a)$ is a result of the conditional independence of $V_{1,n}$ and $V_{2,n}$ given $\textbf{Y}^n$. Equation $(b)$ is a result of the spherical symmetry of the distributions of the channel input and output, where $\nu=\Vert \textbf{y}^n \Vert^2$. This step is further discussed in Appendix \ref{apendaB}.

\begin{figure*}[b]
\rule[1ex]{\textwidth}{0.1pt}
\begin{align}
\begin{split}
P(\hat{W} \neq W|\textbf{Y}^n=\textbf{y}^n) &=P(V_{2,n} = \upsilon|K_n =\upsilon,\textbf{Y}^n=\textbf{y}^n)\\
&= \frac{P(K_n = \upsilon|V_{2,n} = \upsilon,\textbf{Y}^n=\textbf{y}^n)P(V_{2,n} = \upsilon|\textbf{Y}^n=\textbf{y}^n)}{P(K_n = \upsilon|\textbf{Y}^n=\textbf{y}^n)}\\
&=\frac{P(V_{1,n} \geq \upsilon|V_{2,n} = \upsilon,\textbf{Y}^n=\textbf{y}^n)P(V_{2} = \upsilon|\textbf{Y}^n=\textbf{y}^n)}{P(V_{1,n} = \upsilon,V_{2,n} > \upsilon|\textbf{Y}^n=\textbf{y}^n)+P(V_{1,n} \geq \upsilon,V_{2} = \upsilon|\textbf{Y}^n=\textbf{y}^n)}\\
&\stackrel{(a)}{=}\frac{P(V_{1,n} \geq \upsilon|\textbf{Y}^n=\textbf{y}^n)P(V_{2,n} = \upsilon|\textbf{Y}^n=\textbf{y}^n)}{P(V_{1,n} = \upsilon|\textbf{Y}^n=\textbf{y}^n)P(V_{2,n} > \upsilon|\textbf{Y}^n=\textbf{y}^n)+P(V_{1,n} \geq \upsilon|\textbf{Y}^n=\textbf{y}^n)P(V_{2,n} = \upsilon|\textbf{Y}^n=\textbf{y}^n)}\\
&\stackrel{(b)}{=}\frac{P(V_{1,n} \geq \upsilon|\Lambda=\nu)P(V_{2,n} = \upsilon|\Lambda=\nu)}{P(V_{1,n} = \upsilon|\Lambda=\nu)P(V_{2,n} > \upsilon|\Lambda=\nu)+P(V_{1,n} \geq \upsilon|\Lambda=\nu)P(V_{2,n} = \upsilon|\Lambda=\nu)}\\
&=\frac{P(V_{1,n} \geq \upsilon,\Lambda=\nu)P(V_{2,n} = \upsilon|\Lambda=\nu)}{P(V_{1,n} = \upsilon,\Lambda=\nu)P(V_{2,n} > \upsilon|\Lambda=\nu)+P(V_{1,n} \geq \upsilon,\Lambda=\nu)P(V_{2,n} = \upsilon|\Lambda=\nu)}\\
&= \frac{f_{V_2}(\upsilon;n,\nu)\int_\upsilon^\infty f_{\Gamma}(x;2^{-1}n,2\gamma^{-1})f_{\mathcal{X}_n^2}(\nu;x)dx}{f_{\Gamma}(\upsilon;2^{-1}n,2\gamma^{-1})f_{\mathcal{X}_n^2}(\nu;\upsilon)(1-F_{V_2}(\upsilon;n,\nu))+f_{V_2}(\upsilon;n,\nu)\int_\upsilon^\infty f_{\Gamma}(x;2^{-1}n,2\gamma^{-1})f_{\mathcal{X}_n^2}(\nu;x)dx}\\
&=\frac{(M-1)(1-F_{\mathcal{X}_n^2}(\upsilon;\nu))^{(M-2)}f_{\mathcal{X}_n^2}(\upsilon;\nu)\int_{\upsilon}^\infty f_{\Gamma}(x;2^{-1}n,2\gamma^{-1})f_{\mathcal{X}_n^2}(\nu;x)dx}
{\splitfrac{\big[f_{\Gamma}(\upsilon;2^{-1}n,2\gamma^{-1})f_{\mathcal{X}_n^2}(\nu;\upsilon)(1-F_{\mathcal{X}_n^2}(\upsilon;\nu))^{M-1}}{+(M-1)(1-F_{\mathcal{X}_n^2}(\upsilon;\nu))^{(M-2)}f_{\mathcal{X}_n^2}(\upsilon;\nu)\int_{\upsilon}^\infty f_{\Gamma}(x;2^{-1}n,2\gamma^{-1})f_{\mathcal{X}_n^2}(\nu;x)dx\big]}}\\
&=\frac{(M-1)f_{\mathcal{X}_n^2}(\upsilon;\nu)\int_{\upsilon}^\infty f_{\Gamma}(x;2^{-1}n,2\gamma^{-1})f_{\mathcal{X}_n^2}(\nu;x)dx}{{f_{\Gamma}(\upsilon;2^{-1}n,2\gamma^{-1})f_{\mathcal{X}_n^2}(\nu;\upsilon)(1-F_{\mathcal{X}_n^2}(\upsilon;\nu))}{+(M-1)f_{\mathcal{X}_n^2}(\upsilon;\nu)\int_{\upsilon}^\infty f_{\Gamma}(x;2^{-1}n,2\gamma^{-1})f_{\mathcal{X}_n^2}(\nu;x)dx}}\\
&=\lambda(\upsilon,\nu)
\end{split}
\label{makrinar}
\end{align}
\end{figure*}

Following Theorem 1, define a stopping time 
\begin{align}
\bar{\tau} = \inf\{n \geq 0:\lambda(K_n,\Lambda)\leq \epsilon \}
\end{align}
where
\begin{align}
K_n = \min_{i=1,2}(V_{i,n}).
\end{align}
We can avoid this minimization and use as input $V_{1,n}$ to the function $\lambda$ since, as explained in \cite{6767457}, the Gaussian density is monotone with distance. This means that a higher distance results in a higher probability of error and a slightly relaxed upper bound in our case. Let a stopping time
\begin{align}
\tau = \inf\{n \geq 0:\lambda(V_{1,n},\Lambda)\leq \epsilon \}.
\end{align}
Since,
\begin{align}
\lambda(K_n,\Lambda)\leq \lambda(V_{1,n},\Lambda)
\end{align}
then
\begin{align}
l \leq \mathbb{E}[\bar{\tau}]\leq \mathbb{E}[\tau].
\end{align}

When $\epsilon$ is fixed near zero, this relaxation is negligible because the decoder stops the transmission anyway when  $V_{1,n}$ is less than  $V_{2,n}$ with probability $1-\epsilon$. Nevertheless, for the purpose of completeness, we discuss in Appendix \ref{apendA} how to efficiently sample realizations of $V_{2,n}$.
\end{proof}

\section{Numerical Analysis}

Theorem \ref{theor2} does not provide a closed-form formula for evaluation of the achievable rate of an $(l, M,\epsilon)$ VLSF code. However, it provides an upper bound to the probability of error of a single transmission and therefore can be used with Monte Carlo experiments to approximate $\mathbb{E}[\tau]$. In Figure \ref{f2}, we present the numerical evaluation of Theorem \ref{theor2} for signal-to-noise ratio $\gamma=1$ (0 dB) and average probability of error $\epsilon=10^{-3}$. Additionally, the normal approximation of fixed-length coding without feedback is presented for comparison purposes \cite[(296)]{poly}. Evidently, the significant improvements achieved by VLSF codes in \cite{5961844} for discrete channels are also attainable for the Gaussian channel.
 
\begin{figure*}
\centering
\includegraphics[scale=1]{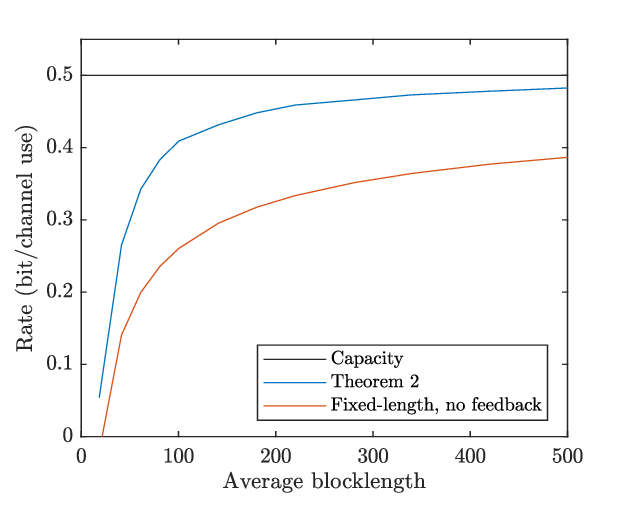}
\caption{Numerical evaluation of Theorem \ref{theor2} for signal-to-noise ratio $\gamma=1$ (0 dB) and average probability of error $\epsilon=10^{-3}$. The capacity of the channel and the normal approximation for fixed-length coding without feedback are also presented.} 
\label{f2}
\end{figure*}

\section{Conclusions}
\label{concl}

This paper provides an achievability bound on the rate of variable-length stop-feedback codes based on minimum distance decoding. This result can be used with any channel where minimum distance decoding is optimal, such as the BSC, the BEC, and the Gaussian channel. The particularization of this general bound for the Gaussian channel establishes the framework for approximating the average blocklength as well as the probability mass function of the decoding times via Monte Carlo experiments. Numerical analysis verifies that VLSF codes over the Gaussian channel have a significantly higher achievable rate for a given average blocklength compared to fixed-length codes without feedback. Our analysis assumes that feedback is instantaneous and it does not affect the rate of the code. An interesting question arises of how to utilize the probability mass function of the decoding times to select the optimal ones in order to maximize the rate $k/(n+n_f)$, where $k$ is the payload and $n_f$ is the number of the uses of the feedback channel.

\section*{Acknowledgment}
This work is supported by the Engineering and Physical Sciences Research Council (EP/L016656/1) and the University of Bristol.

\bibliography{vlsf_gaussian}

\bibliographystyle{ieeetr}

\newpage
\appendices
\section{Complement to the proof of (\ref{makrinar})}
\label{apendaB}

In derivation (\ref{makrinar2}), at the bottom of the page, equality $(a)$ stems from the fact that $V_{2,n}$ is the minimum of non-central chi-squared random variables that depend only on $\Lambda=\Vert \textbf{y}^n \Vert^2=\nu$. To establish equality $(b)$ in (\ref{makrinar}), it is sufficient to demonstrate the following.

\begin{figure*}[b]
\rule[1ex]{\textwidth}{0.1pt}
\begin{align}
\begin{split}
P(\hat{W} \neq W|\textbf{Y}^n=\textbf{y}^n)&=\frac{P(V_{1,n} \geq \upsilon|\textbf{Y}^n=\textbf{y}^n)P(V_{2,n} = \upsilon|\textbf{Y}^n=\textbf{y}^n)}{P(V_{1,n} = \upsilon|\textbf{Y}^n=\textbf{y}^n)P(V_{2,n} > \upsilon|\textbf{Y}^n=\textbf{y}^n)+P(V_{1,n} \geq \upsilon|\textbf{Y}^n=\textbf{y}^n)P(V_{2,n} = \upsilon|\textbf{Y}^n=\textbf{y}^n)}\\
&\stackrel{(a)}{=}\frac{P(V_{1,n} \geq \upsilon|\textbf{Y}^n=\textbf{y}^n)P(V_{2,n} = \upsilon|\Lambda=\nu)}{P(V_{1,n} = \upsilon|\textbf{Y}^n=\textbf{y}^n)P(V_{2,n} > \upsilon|\Lambda=\nu)+P(V_{1,n} \geq \upsilon|\textbf{Y}^n=\textbf{y}^n)P(V_{2,n} = \upsilon|\Lambda=\nu)}\\
&=\frac{P(V_{2,n} = \upsilon|\Lambda=\nu)}{\frac{P(V_{1,n} = \upsilon|\textbf{Y}^n=\textbf{y}^n)}{P(V_{1,n} \geq \upsilon|\textbf{Y}^n=\textbf{y}^n)}P(V_{2,n} > \upsilon|\Lambda=\nu)+P(V_{2,n} = \upsilon|\Lambda=\nu)}
\end{split}
\label{makrinar2}
\end{align}
\end{figure*}

\begin{align}
\begin{split}
&\frac{P(V_{1,n} = \upsilon|\Lambda=\nu)}{P(V_{1,n} \geq \upsilon|\Lambda=\nu)}\\
&\stackrel{(a)}{=}\frac{\int_\textbf{x}P(V_{1,n} = \upsilon|\Lambda=\nu,\textbf{Y}^n=\textbf{x})f_{\textbf{Y}^n|\Lambda=\nu}(\textbf{x})d\textbf{x}}{\int_\textbf{x}P(V_{1,n} \geq \upsilon|\Lambda=\nu,\textbf{Y}^n=\textbf{x})f_{\textbf{Y}^n|\Lambda=\nu}(\textbf{x})d\textbf{x}}\\
&\stackrel{(b)}{=}\frac{\int_\textbf{x}P(V_{1,n} = \upsilon|\textbf{Y}^n=\textbf{x})f_{\textbf{Y}^n|\Lambda=\nu}(\textbf{x})d\textbf{x}}{\int_\textbf{x}P(V_{1,n} \geq \upsilon|\textbf{Y}^n=\textbf{x})f_{\textbf{Y}^n|\Lambda=\nu}(\textbf{x})d\textbf{x}}\\
&\stackrel{(c)}{=}\frac{\int_\textbf{x}P(V_{1,n} = \upsilon|\textbf{Y}^n=\textbf{y}^n)f_{\textbf{Y}^n|\Lambda=\nu}(\textbf{x})d\textbf{x}}{\int_\textbf{x}P(V_{1,n} \geq \upsilon|\textbf{Y}^n=\textbf{y}^n)f_{\textbf{Y}^n|\Lambda=\nu}(\textbf{x})d\textbf{x}}\\
&=\frac{P(V_{1,n} = \upsilon|\textbf{Y}^n=\textbf{y}^n)\int_\textbf{x}f_{\textbf{Y}^n|\Lambda=\nu}(\textbf{x})d\textbf{x}}{P(V_{1,n} \geq \upsilon|\textbf{Y}^n=\textbf{y}^n)\int_\textbf{x}f_{\textbf{Y}^n|\Lambda=\nu}(\textbf{x})d\textbf{x}}\\
&=\frac{P(V_{1,n} = \upsilon|\textbf{Y}^n=\textbf{y}^n)}{P(V_{1,n} \geq \upsilon|\textbf{Y}^n=\textbf{y}^n)}.
\end{split}
\end{align}
Here, $(a)$ follows from the law of total probability, $(b)$ from noting that the integrand is greater than zero only when $\Vert \textbf{x} \Vert^2 = \Vert \textbf{y}^n \Vert^2 =\nu$; in this case, the condition $\Lambda=\nu$ does not provide extra information. Finally, $(c)$ arises from the spherical symmetry of the distributions of the channel input, the noise, and the channel output.  This implies that
\begin{align}
P(V_{1,n} = \upsilon|\textbf{Y}^n=\textbf{y}^n)=P(V_{1,n} = \upsilon|\textbf{Y}^n=\bar{\textbf{y}}^n)
\end{align}
for any $\bar{\textbf{y}}^n$ such that $\Vert \bar{\textbf{y}}^n \Vert^2=\Vert \textbf{y}^n \Vert^2$.
\section{Sampling of $V_{2,n}$}
\label{apendA}
As mentioned in Section \ref{gaus_res}, the production of samples of $V_{2,n}$ can be avoided with a slight relaxation of the achievability bound. Despite this fact, we give a method based on inverse sampling \cite[Theorem 2.1]{Devroye_1986}. Its basic concept is the following. Let an continuous random variable $T$ with cdf $F_T$ and quantile function $F_T^{-1}$. Then,
\begin{align}
T = F_T^{-1}(U)
\end{align}
where $U$ is uniformly distributed in $[0,1]$.

A recursive method is proposed for the sampling of $V_{2,n}$. Assuming, without loss of generality, that the realization of $d(\textbf{X}^{n-1}_2,\textbf{Y}^{n-1})=k$ is the minimum, hence
\begin{align}
V_{2,n-1} = k.
\end{align}
Note that
\begin{align}
\begin{split}
V_{2,n} &= \min_{j=2,...,M} d(\textbf{X}^{n}_j,\textbf{Y}^{n})\\
&=V_{2,n-1}+\min_{j=2,...,M} (d(\textbf{X}^{n}_j,\textbf{Y}^{n})-V_{2,n-1})\\
&=V_{2,n-1}+\min(Z_1,Z_2).
\end{split}
\label{rngidea}
\end{align}
where $Z_1$ follows the non-central chi-squared distribution with $1$ degree of freedom and non-centrality parameter $Y_n^2$ and
\begin{align}
\begin{split}
Z_2 &=  \min_{j=3,...,M} (d(\textbf{X}^{n}_j,\textbf{Y}^{n})-V_{2,n-1})\\
&= \min_{j=3,...,M} Q_j.
\end{split}
\end{align}
Random variables $Q_j$ for $j=3,...,M$ are conditionally independent given $\textbf{Y}^n$ and $V_{2,n-1}$, and they are identically distributed as follows.
\begin{align}
Q_j \sim W = W_1+W_2.
\end{align}
Random variable $W_1$ follows the same distribution as $Z_1$ and $W_2$ follows the conditioned non-central chi-squared distribution with pdf 
\begin{align}
\begin{split}
f_{W_2}(x)&=f_{\mathcal{X}_{n-1}^2|\mathcal{X}_{n-1}^2\geq k}(x;\eta)\\
&=\frac{1\{x\geq k\}f_{\mathcal{X}_{n-1}^2}(x;\eta)}{1-F_{\mathcal{X}_{n-1}^2}(k;\eta)}
\end{split}
\end{align}
where
\begin{align}
\eta = \Vert \textbf{y}^{n-1} \Vert^2.
\end{align}
Then, the cdf of W is

\begin{align}
\begin{split}
F_W(w)&=\int_0^\infty F_{W_1}(w-x)f_{W_2}(x)dx \\
&=\int_0^\infty F_{W_1}(w-x)\frac{1\{x\geq k\}f_{\mathcal{X}_{n-1}^2}(x;\eta)}{1-F_{\mathcal{X}_{n-1}^2}(k;\eta)}dx\\
&=\int_{k}^\infty F_{W_1}(w-x)\frac{f_{\mathcal{X}_{n-1}^2}(x;\eta)}{1-F_{\mathcal{X}_{n-1}^2}(k;\eta)} dx\\
&=\int_{0}^\infty F_{W_1}(w-x)\frac{f_{\mathcal{X}_{n-1}^2}(x;\eta)}{1-F_{\mathcal{X}_{n-1}^2}(k;\eta)}dx\\
&\quad-\int_0^{k} F_{W_1}(w-x)\frac{f_{\mathcal{X}_{n-1}^2}(x;\eta)}{1-F_{\mathcal{X}_{n-1}^2}(k;\eta)}dx\\
&=\frac{\splitfrac{\big[\int_{0}^\infty F_{W_1}(w-x)f_{\mathcal{X}_{n-1}^2}(x;\eta)dx}{-\int_0^{k} F_{W_1}(w-x)f_{X_{n-1}^2}(x;\eta)dx)\big]}}{1-F_{\mathcal{X}_{n-1}^2}(k;\eta)}\\
&=\frac{{F_{\mathcal{X}_{n}^2}(w;\Vert \textbf{y}^{n} \Vert^2)}{-\int_0^{k} F_{W_1}(w-x)f_{\mathcal{X}_{n-1}^2}(x;\eta)dx}}{1-F_{\mathcal{X}_{n-1}^2}(k;\eta)}.
\end{split}
\end{align}
Finally, similarly to (\ref{mincdf}), the cdf of $Z_2$ is
\begin{align}
F_{Z_2}(x) =1-(1-F_{W}(x))^{M-2}.
\end{align}
Since $F_{Z_2}^{-1}$ is a monotonic function, it can be numerically approximated in an efficient manner. When combined with the recursion (\ref{rngidea}), it can then be used to generate realizations of $V_{2,n}$.

\end{document}